%
\documentclass[conference,11pt,onecolumn]{IEEEtran}
\usepackage{graphicx}
\usepackage{amsmath,amssymb,amsthm}    
\usepackage{float}
\usepackage{subfig,graphicx}
\usepackage[linesnumbered,ruled,vlined]{algorithm2e}
\usepackage{comment}
\usepackage{color}
%

\newtheorem{defn}{Definition}
\newtheorem{thrm}{Theorem}
\newtheorem{cor}{Corollary}
\newtheorem{lma}{Lemma}

\DeclareMathOperator*{\minimize}{\text{minimize}\,}

\DeclareMathOperator*{\maximize}{\text{maximize}\,}
\DeclareMathOperator*{\argmax}{\text{arg max}\,}


%
\title{Framing Effects on Strategic Information Design under Receiver Distrust and Unknown State}

\author{
\IEEEauthorblockN{Doris E. M. Brown, Venkata Sriram Siddhardh Nadendla}
\\[-1ex]
\IEEEauthorblockA{Computer Science Department \\
Missouri University of Science and Technology\\
Rolla, Missouri \\
Email: \{deby3f, nadendla\}@mst.edu}
}

\begin{document}

\maketitle              
\begin{abstract}
Strategic Information Design is a framework where a sender designs information strategically to steer its receiver's decision towards a desired choice. Traditionally, such frameworks have always assumed that the sender and the receiver comprehends the state of the choice environment, and that the receiver always trusts the sender's signal. This paper deviates from these assumptions and re-investigates strategic information design in the presence of distrustful receiver and when both sender and receiver cannot observe/comprehend the environment state space. Specifically, we assume that both sender and receiver has access to non-identical beliefs about choice rewards (with sender's belief being more accurate), but not the environment state that determines these rewards. Furthermore, given that the receiver does not trust the sender, we also assume that the receiver updates its prior in a non-Bayesian manner. We evaluate the Stackelberg equilibrium and investigate effects of information framing (i.e. send complete signal, or just expected value of the signal) on the equilibrium. Furthermore, we also investigate trust dynamics at the receiver, under the assumption that the receiver minimizes regret in hindsight. Simulation results are presented to illustrate signaling effects and trust dynamics in strategic information design.

\end{abstract}

\begin{IEEEkeywords}
Trust, Strategic Information Design, Stackelberg Equilibrium, Framing, Kullback-Leibler divergence.
\end{IEEEkeywords}

\section{Introduction}


Strategic information design is a framework where a sender-receiver pair interact with each other with mismatched motives in the presence of informational asymmetries. The sender designs information strategically and sends it to the receiver to steer the receiver's decision in sender's favor. Such a interaction framework has a diverse range of applications ranging from marketing and politics, to even cyber-physical-human systems such as intelligent transportation systems with connected and autonomous vehicles being guided by city's transportation infrastructure. Therefore, this framework has been investigated in diverse domains using several labels such as strategic information transmission \cite{Crawford1982}, cheap talk \cite{Farrell1996} and Bayesian persuasion \cite{Kamenica2011} in economics, optimal information disclosure in politics \cite{Rayo2010}, and strategic information design (along with all the other labels mentioned) in computer science \cite{Dughmi2017,Das2017,Oudah2018,Babichenko2021,Nachbar2020}. 

Due to the wide applicability of strategic information design framework, there is also significant amount of literature available on this topic, particularly in the economics domain. However, due to rich interdisciplinary collaborations across the domains of economics and computer science, strategic information design has also been investigated by computer scientists from a computing viewpoint. Some of the initial attempts include the work by Dughmi (refer to \cite{Dughmi2017} and citations within), where he and his team has investigated the computational complexity of designing information strategically under different interaction settings. Later,  More recently, strategic information transmission has been used in the design of intelligent transportation systems in \cite{Das2017} where Das \emph{et al.} have developed strategic information signals to steer individual travelers' routing decisions with the goal of reducing traffic congestion. More recently, there has been a sudden increase in attention to strategic information design in the context of strategic classification \cite{Sundaram2021}, audit games \cite{Yan2020}, bandit settings \cite{Feng2020}, dynamic interaction with regret-minimizing agents \cite{Babichenko2021} and its implications to price of anarchy \cite{Nachbar2020} as well. However, almost all frameworks make strong assumptions about the interacting agents. For example, both sender and receiver are assumed to have complete knowledge about the choice environment, i.e. its state space and how signaling can be designed based on state information. However, in some complex choice environments (e.g. human-robot interaction during navigation on a large and dynamic graph, where neither human, nor robot has complete knowledge about the graph), both sender and receiver may not have complete information about the state space of the choice environment. 

Furthermore, most frameworks assume that the receiver updates its belief using Bayes rule, which means that the receiver completely trusts the sender's information. However, misalignment in motives between sender and receiver can lead to receiver distrust, thereby decreasing the ability of the system to effectively persuade its users. Trust has been extensively studied in the context of traditional recommender systems literature \cite{odonovan2005,barnett2005}. In fact, several mitigation techniques based on interpreting and/or explaining recommendations have been proposed in the past \cite{Dong2010,Zhang2020,Tintarev2007,Zhang2020,Herlocker2000,ferwerda2018}. However, there is very little attention given to receiver trust and its distorting effects on strategic information design at the sender. It is natural to expect that distrust leads to deviations from Bayesian updating, which is investigated in this paper by constructing receiver's posterior beliefs based on convex combination of prior and signalled beliefs. As evident in real-world situations \cite{griffin1992}, a human's measure of trust can lead to overvaluing or undervaluing new information when constructing an updated belief. As noted in \cite{epstein2006}, while Bayesian updating is deemed to be a standard model, updating behavior can be understood differently when trust probabilities are treated as subjective entities by a human receiver, who has the autonomy to either completely/partially accept, or totally reject the sender's signal depending on their trust levels \cite{deryugina2013}. However, one interesting paper to note with respect to deviations to Bayesian updating (not in the context of receiver distrust) is by Guo \emph{et al.} in \cite{guo2018}, which seeks to resolve miscalibration between a sender's prediction and the true realized distribution of state rewards using a novel belief update rule. While this closely relates to the work done in this paper, \cite{guo2018} continues to rely on Bayesian updating, making it inappropriate when receiver exhibits distrust regarding the sender's signals. 
This paper addresses some practical constraints in strategic information design, particularly regarding limitations at both sender and receiver, which have not been addressed \emph{a priori} in the literature. We model strategic interaction between the sender and the receiver as a Stackelberg signaling game, where both agents have access to non-identical (prior) belief distributions regarding choice rewards, but does not have access to environment state. We assume that the sender constructs a reward belief signal to steer the receiver's decision, who updates its posterior belief based on its trust regarding the sender, its own prior reward belief and the sender's signal without the use of Bayes rule. We compute the equilibrium strategies at both the sender and the receiver, and investigate conditions under which (i) the sender reveals manipulated information, and (ii) receiver trust deteriorates when the true rewards are realized in hindsight.


\section{Model}
Consider a strategic information design setting with a choice environment with a set of choices $\mathcal{N} = \{1, \cdots, N\}$, as shown in Figure 1. Let $\boldsymbol{x} = \{x_1, ..., x_N\} \in \mathcal{X}$ denote the reward profile corresponding to the choices in $\mathcal{N}$. We assume that both the sender (Alice) and receiver (Bob) do not have access to choice state space, but have private and incomplete information regarding the choice rewards in the form of prior beliefs $p(\boldsymbol{x})$ and $q(\boldsymbol{x})$ respectively. In order to make this interaction sensible, we assume that Alice can have access to extrinsic private information which is typically acquired through some sensing infrastructure (e.g. sensor network, social sensing), in order to compute her posterior belief $p(\boldsymbol{x})$. This introduces information asymmetry in our problem setting, which motivates Bob to rely on Alice's messages.

\begin{figure*}[!t]
\centering
\includegraphics[width=1\textwidth]{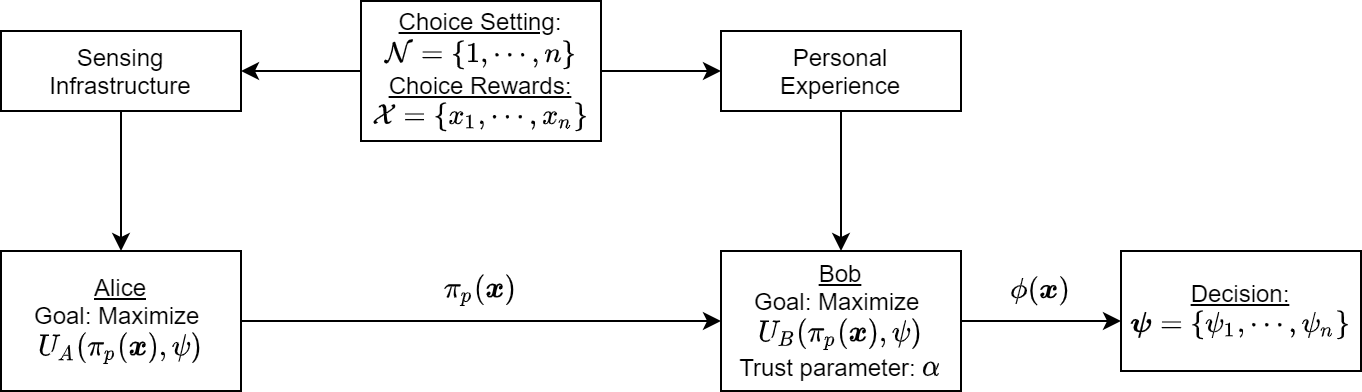}
\caption{System Model and Interactions}
\label{fig:my_label}
\end{figure*}

Assume that both Alice and Bob are expected utility maximizers. In an attempt to maximize her own utility, Alice constructs a new belief signal $\pi_p(\boldsymbol{x})$ over the simplex $\mathcal{S}_{\mathcal{X}}$ based on Alice's belief $p(\boldsymbol{x})$ and shares it with Bob. Then, Bob combines his prior belief with the received information and constructs a posterior belief
\begin{equation}
\phi(\boldsymbol{x}) = \alpha \pi_p(\boldsymbol{x}) + (1 - \alpha) q(\boldsymbol{x}),
\label{Eqn: Posterior Belief}
\end{equation}
where $\alpha \in $ [0, 1] is a parameter that captures Bob's trust in Alice's message. For example, if $\alpha \rightarrow 1$, then Bob starts trusting Alice blindly via disregarding his own prior belief regarding choice rewards. On the other extreme, if $\alpha \rightarrow 0$, then Bob starts distrusting Alice and makes decisions that are totally based on his own prior belief.

Let Bob's state be denoted by the tuple $(\alpha, q(\boldsymbol{x}))$. Let $\boldsymbol{\psi} = \{\psi_1, ..., \psi_N\} \in \mathcal{S}_{\mathcal{N}}$ denote the probabilistic decision rule employed by Bob, where $\mathcal{S}_{\mathcal{N}}$ is the probability simplex on the choice set $\mathcal{N}$, and $\boldsymbol{\psi}_n$ is the probability of picking the $n^{th}$ choice based on Bob's posterior belief $\phi(\boldsymbol{x})$. In such a case, Alice realizes an ex-post
\begin{equation}
U_A(\pi_p(\boldsymbol{x}), \boldsymbol{\psi}) = \displaystyle \sum_{n = 1}^{N} \psi_n \cdot \mathbb{E}_{p}(x_n),
\label{Eqn: Alice Ex-Post Utility}
\end{equation}c
where $p(\boldsymbol{x}_n)$ denotes the $n^{th}$ marginal reward distribution in $p(\boldsymbol{x})$ and $\mathbb{E}_{p}(x_n)$ is the $n^{th}$ marginal expectation of $p(\boldsymbol{x})$. On the other hand, Bob's ex-post utility is given by
\begin{equation}
U_B(\pi_p(\boldsymbol{x}), \psi) = \sum_{n = 1}^{N} \psi_n \cdot \mathbb{E}_{\phi}(x_n),
\end{equation}
where $\phi(\boldsymbol{x}_n)$ denotes the $n^{th}$ marginal reward distribution in $\phi(\boldsymbol{x})$, and $\mathbb{E}_{\phi}(x_n)$ is the $n^{th}$ marginal expectation of $\phi(\boldsymbol{x})$. For example, in underground mines, miners and the robot experience the same rewards regarding different escape-route choices, when miners attempt to escape from a mine. Due to their limited sensing capabilities, both agents can only observe the state within their neighborhood and cannot 

In this paper, we model the strategic interaction between Alice and Bob as a Stackelberg game with Alice as the leader and Bob as the follower, as shown below:
\begin{equation}
\begin{array}{rcl}
\boldsymbol{\psi}^*(\boldsymbol{\pi_p(\boldsymbol{x})}) & \triangleq & \displaystyle \argmax_{\boldsymbol{\psi}} U_B(\pi_p(\boldsymbol{x}), \boldsymbol{\psi}), \textrm{ and } 
\\[3ex]
\pi^*_p(\boldsymbol{x}) & \triangleq & \displaystyle \argmax_{\pi_p(\boldsymbol{x})} U_A(\pi_p(\boldsymbol{x}), \boldsymbol{\psi}^*(\pi_p(\boldsymbol{x}))).
\end{array}
\tag{P1}
\label{Prob: Stackelberg Game}
\end{equation}

\section{Equilibrium Analysis}
In this section, we employ backward induction to solve the game defined in Problem \ref{Prob: Stackelberg Game}. Since Alice is the leader and Bob is the follower, we first evaluate Bob's best response, followed by Alice's optimal signaling strategy using Bob's best response.

\subsection{Bob's Best Response}
Given that Alice chooses a signaling strategy $\pi_p(\boldsymbol{x})$, Bob's best response is to choose $\boldsymbol{\psi} = \{ \psi_1, \cdots, \psi_N \}$ such that the expected utility at Bob
\begin{equation}
U_B(\pi_p(\boldsymbol{x}), \boldsymbol{\psi}) \ = \  \displaystyle \sum_{n = 1}^N \psi_n \displaystyle \mathbb{E}_{\phi}(x_n) \ = \  \displaystyle \sum_{n = 1}^N \psi_n \displaystyle \left[ \alpha \mathbb{E}_{\pi_p}(x_n) + (1-\alpha) \mathbb{E}_{q}(x_n) \right]
\label{Eqn: Bob Expected Utility}
\end{equation}
is maximized.

For the sake of easy notation, let us denote
\begin{equation}
y_n = \displaystyle \alpha \mathbb{E}_{\pi_p}(x_n) + (1-\alpha) \mathbb{E}_{q}(x_n),
\label{Eqn: y_n}
\end{equation}
for all $n \in \mathcal{N}$. Then, the expected utility at Bob can be rewritten as
\begin{equation}
U_B(\pi_p(\boldsymbol{x}), \boldsymbol{\psi}) = \displaystyle \sum_{n = 1}^N \psi_n y_n.
\end{equation}
In other words, the first optimization problem in \eqref{Prob: Stackelberg Game} (which Bob is interested to solve) reduces to
\begin{equation}
\begin{array}{rl}
\displaystyle \minimize_{\boldsymbol{\psi}} & \displaystyle - \boldsymbol{y}^T \boldsymbol{\psi}
\\[3ex]
\text{subject to} & \text{1. } \displaystyle \boldsymbol{1}^T \boldsymbol{\psi} = 1,
\\[2ex]
& \text{2. } \displaystyle \boldsymbol{\psi} \geq \boldsymbol{0},
\end{array}
\tag{P2}
\label{Prob: Bob's Best Response}
\end{equation}
where $\boldsymbol{y} = \{ y_1, \cdots, y_N \}$ is a vector of $y_n$ variables defined in Equation \eqref{Eqn: y_n}.

\begin{lma}
For a given trust parameter $\alpha$, signaling strategy $\pi_p(\boldsymbol{x})$ and prior $q(\boldsymbol{x})$, Bob's best response (i.e. solution to Problem \ref{Prob: Bob's Best Response}) is given by
\begin{equation}
\psi_{n^*}(\pi_p(\boldsymbol{x})) = 
\begin{cases}
1, & \text{if } n = \displaystyle \argmax_{n \in \mathcal{N}} y_n,
\\[2ex]
0, & \text{otherwise},
\end{cases}
\label{Eqn: Bob - Best Response}
\end{equation}
where $y_n = \displaystyle \alpha \ \mathbb{E}_{\boldsymbol{\pi_p}}(x_n) + (1-\alpha) \ \mathbb{E}_{q}(x_n)$.
\label{Thrm: Bob Best Response}
\end{lma}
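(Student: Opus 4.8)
The plan is to recognize Problem \ref{Prob: Bob's Best Response} as a linear program over the probability simplex $\mathcal{S}_{\mathcal{N}}$ and to exploit the elementary fact that a linear objective over this simplex is maximized at a vertex. Since the feasible set is exactly $\{\boldsymbol{\psi} : \boldsymbol{1}^T \boldsymbol{\psi} = 1, \ \boldsymbol{\psi} \geq \boldsymbol{0}\}$, every feasible $\boldsymbol{\psi}$ expresses the objective $\boldsymbol{y}^T \boldsymbol{\psi} = \sum_{n=1}^N \psi_n y_n$ as a convex combination of the scalars $y_1, \cdots, y_N$. Minimizing $-\boldsymbol{y}^T \boldsymbol{\psi}$ is therefore equivalent to maximizing this convex combination, and the entire argument reduces to the observation that no convex combination can exceed the largest of the $y_n$.

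First I would fix $n^* \in \displaystyle \argmax_{n \in \mathcal{N}} y_n$ and establish the upper bound
\begin{equation}
\boldsymbol{y}^T \boldsymbol{\psi} = \sum_{n=1}^N \psi_n y_n \ \leq \ \sum_{n=1}^N \psi_n y_{n^*} = y_{n^*},
\end{equation}
where the inequality uses $y_n \leq y_{n^*}$ together with $\psi_n \geq 0$, and the final equality uses $\sum_{n} \psi_n = 1$. Next I would exhibit the candidate in Equation \eqref{Eqn: Bob - Best Response}, namely the vertex placing unit mass on $n^*$, verify that it is feasible (nonnegative entries summing to one), and note that it attains $\boldsymbol{y}^T \boldsymbol{\psi} = y_{n^*}$. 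Since this value coincides with the upper bound, the candidate is optimal, which proves the claim.

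The main subtlety, rather than any genuine obstacle, is uniqueness. When the maximizer of $y_n$ is unique, the vertex in Equation \eqref{Eqn: Bob - Best Response} is the unique optimum; when several indices tie for the maximum, every distribution supported on the tie set attains $y_{n^*}$, so the deterministic rule is one optimal selection among many. I would therefore state the lemma for the generic unique-maximizer case and simply remark on the tie-breaking convention. As an optional cross-check, I would confirm the result via the Karush-Kuhn-Tucker (KKT) conditions of the linear program: stationarity forces the multiplier on the equality constraint to equal $\displaystyle \max_{n} y_n$, and complementary slackness then forces $\psi_n = 0$ for every $n$ with $y_n < \displaystyle \max_{m} y_m$, recovering the same deterministic best response.
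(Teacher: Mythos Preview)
Your argument is correct. Both you and the paper reduce Problem \ref{Prob: Bob's Best Response} to maximizing a linear functional over the simplex, but the routes to the conclusion differ. The paper writes down the Lagrangian, derives the dual function, formulates the dual LP, solves it to obtain $\nu^* = \max_{n \in \mathcal{N}} y_n$, and then invokes zero duality gap for linear programs to transfer this value back to the primal. Your approach is purely primal: you bound $\sum_n \psi_n y_n \leq y_{n^*}$ directly from feasibility and exhibit the vertex attaining the bound. Your route is more elementary, requiring nothing beyond the definition of a convex combination, whereas the paper's route leans on LP strong duality; on the other hand, the paper's dual computation explicitly identifies the optimal multiplier $\nu^*$, which is essentially what your optional KKT remark recovers. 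The two proofs are equivalent in content, and your treatment of the non-unique maximizer case matches the paper's subsequent commentary.
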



\begin{proof}
 The Lagrangian function for Problem \eqref{Prob: Bob's Best Response} is given by
 \begin{equation}
 \begin{array}{lcl}
 L(\boldsymbol{\psi}, \boldsymbol{\lambda}, \nu) & = & - \boldsymbol{y}^T \boldsymbol{\psi}  + \nu \left( \boldsymbol{1}^T \boldsymbol{\psi} - 1 \right) - \boldsymbol{\lambda}^T \boldsymbol{\psi}
 \\[2ex]
 & = & \left[ - \boldsymbol{y}  + \nu \boldsymbol{1} - \boldsymbol{\lambda} \right]^T \boldsymbol{\psi} - \nu
 \end{array}
 \label{Eqn: Bob-Lagrangian}
 \end{equation}

 The dual function for Problem \eqref{Prob: Bob's Best Response} is given by
 \begin{equation}
 \begin{array}{lcl}
 \ell(\boldsymbol{\lambda}, \nu) & = & \displaystyle \minimize_{\boldsymbol{\psi} \in \mathcal{S}_{\mathcal{N}}} \ L(\boldsymbol{\psi}, \boldsymbol{\lambda}, \nu)
 \\[3ex]
 & = & 
 \begin{cases}
 - \nu, & \text{if } - \boldsymbol{y} + \nu \boldsymbol{1} - \boldsymbol{\lambda} \geq 0,
 \\[1ex]
  -\infty, & \text{otherwise}.
 \end{cases}
 \end{array}
 \label{Eqn: Bob-Dual}
 \end{equation}

 Note that, for all $\boldsymbol{\lambda} \succeq 0$, the above dual function acts as a lower bound to the Lagrangian function in Equation \eqref{Eqn: Bob-Lagrangian}, which itself acts as a lower bound to the objective function $- \boldsymbol{y}^T \boldsymbol{\psi}$.

 Therefore, the dual problem to Problem \ref{Prob: Bob's Best Response} is given as follows:
 \begin{equation}
 \begin{array}{rl}
 \displaystyle \maximize_{\boldsymbol{\lambda}, \ \nu} & \displaystyle - \nu
 \\[3ex]
 \text{subject to} & \text{1. } \displaystyle \nu \geq y_n + \lambda_n, \text{ for all } n \in \mathcal{N},
 \\[2ex]
 & \text{2. } \displaystyle \lambda_n \geq 0, \text{ for all } n \in \mathcal{N}.
 \end{array}
 \tag{P3}
 \label{Prob: Bob's Best Response - Dual}
 \end{equation}

 Without any loss of generality, Constraint 1 in Problem \eqref{Prob: Bob's Best Response - Dual} can be equivalently replaced with the statement
 \begin{equation}
 \nu \ \geq \ \displaystyle \max_{n \in \mathcal{N}} \  \left( y_n + \lambda_n \right).
 \end{equation}
 Since the objective of Problem \eqref{Prob: Bob's Best Response - Dual} is equivalent to minimizing $\nu$, the optimal choice of $\nu$ reduces to
 \begin{equation}
 \nu^* = \displaystyle \max_{n \in \mathcal{N}} \ y_n.
 \end{equation}
 Since the duality gap in a linear program is zero, the optimal value of the primal problem in \eqref{Prob: Bob's Best Response} is also equal to $\nu^*$, which can be obtained with Bob's best response shown in Equation \eqref{Eqn: Bob - Best Response}.
\end{proof}

In other words, Bob's optimal choice that maximizes his expected utility according to his posterior belief, is a singleton if there is a unique maximum. However, if there are multiple optimal choices, then Bob can randomize to choose any of the optimal choices as they all produce the same outcome.

\subsection{Alice's Optimal Strategy}
Alice's optimal signaling strategy is to send a signal $\pi_p(\boldsymbol{x})$ such that 
\begin{equation}
U_A(\pi_p(\boldsymbol{x}), \boldsymbol{\psi}) = \displaystyle \sum_{n = 1}^{N} \psi_n \cdot \mathbb{E}_{p}(x_n)
\label{Eqn: Alice Optimal Strategy}
\end{equation}
is maximized. Due to the specific structure of Alice's expected utility, we could not find a closed-form expression for Alice's signaling strategy. However, the following theorem provides a necessary condition for optimal strategy at the sender.

\begin{thrm}
The optimal signaling strategy for Alice is to choose any distribution $\pi_p(\boldsymbol{x})$ that satisfies 
\begin{equation}
\alpha \mathbb{E}_{\pi_p}(x_{n^*}) + (1 - \alpha) \mathbb{E}_q(x_{n^*}) \ \geq \ \alpha \mathbb{E}_{\pi_p}(x_{n}) + (1 - \alpha) \mathbb{E}_q(x_n)
\label{Eqn: Optimal Signaling Condition}
\end{equation}
for all $n \in \mathcal{N}$, where $n^* = \argmax \mathbb{E}_{p}(\boldsymbol{x})$.
\label{Thrm: Alice Optimal Signaling}
\end{thrm}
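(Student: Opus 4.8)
The plan is to solve the leader's problem by backward induction, substituting Bob's best response from Lemma~\ref{Thrm: Bob Best Response} into Alice's objective and then arguing that Alice's optimization collapses to a choice of \emph{which} single alternative Bob is induced to select. First I would note that, by Lemma~\ref{Thrm: Bob Best Response}, for any signal $\pi_p(\boldsymbol{x})$ Bob picks the index $\hat{n} = \argmax_{n \in \mathcal{N}} y_n$ (breaking ties, under the standard Stackelberg convention, in the leader's favor). Plugging $\psi_{\hat{n}} = 1$ and $\psi_n = 0$ for $n \neq \hat{n}$ into \eqref{Eqn: Alice Optimal Strategy}, Alice's realized utility reduces to $U_A = \mathbb{E}_p(x_{\hat{n}})$. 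The key observation is that Alice's utility depends on the signal \emph{only} through the induced index $\hat{n}$; the detailed shape of $\pi_p$ is irrelevant except insofar as it changes which choice is maximal under $\boldsymbol{y}$.

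Next I would establish the upper bound and its attainment. Since Bob selects exactly one choice, Alice's attainable utility lies in the finite set $\{\mathbb{E}_p(x_1), \dots, \mathbb{E}_p(x_N)\}$, so $U_A \leq \max_{n} \mathbb{E}_p(x_n) = \mathbb{E}_p(x_{n^*})$ with $n^* = \argmax_{n} \mathbb{E}_p(x_n)$. This bound is met if and only if Alice can design the signal so that Bob's induced choice is $\hat{n} = n^*$, i.e. so that $n^*$ is (weakly) a maximizer of $y_n$. Writing out $y_{n^*} \geq y_n$ for all $n$ via the definition in \eqref{Eqn: y_n} yields precisely condition \eqref{Eqn: Optimal Signaling Condition}. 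Hence any $\pi_p$ satisfying \eqref{Eqn: Optimal Signaling Condition} makes $n^*$ a best response for Bob, and with ties resolved in Alice's favor Bob selects $n^*$, achieving the global maximum $\mathbb{E}_p(x_{n^*})$; conversely, any optimal signal must render $n^*$ a best response and therefore must satisfy the same condition.

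The step I expect to be the main obstacle is \emph{feasibility}, i.e. showing that the set of distributions satisfying \eqref{Eqn: Optimal Signaling Condition} is nonempty so that the upper bound is genuinely attainable. Rearranging the condition as $\alpha\,[\mathbb{E}_{\pi_p}(x_{n^*}) - \mathbb{E}_{\pi_p}(x_n)] \geq (1-\alpha)\,[\mathbb{E}_q(x_n) - \mathbb{E}_q(x_{n^*})]$ shows that Alice must inflate the signalled marginal expectation of $n^*$ relative to $n$ by enough to overcome any advantage $n$ enjoys under Bob's prior $q$. Whenever $\alpha > 0$ and the support $\mathcal{X}$ is rich enough to let $\mathbb{E}_{\pi_p}(x_{n^*})$ be made sufficiently large (for instance by concentrating $\pi_p$ on reward profiles with a high $n^*$-th coordinate), feasibility follows; the degenerate case $\alpha = 0$ strips Alice of all leverage and the condition reduces to a statement about $q$ alone. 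I would therefore either state the mild trust/support assumption guaranteeing a nonempty feasible set, or present the result as a characterization of optimal signals \emph{among those that are feasible}, and close by remarking that the multiplicity of optimal signals (the theorem asks only for \emph{any} such $\pi_p$) simply reflects that Alice's utility is insensitive to how the required gap in marginal expectations is realized.
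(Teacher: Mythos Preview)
Your proposal is correct and follows essentially the same backward-induction route as the paper: substitute Bob's best response from Lemma~\ref{Thrm: Bob Best Response} into Alice's objective, observe that Alice's utility collapses to $\mathbb{E}_p(x_{\hat{n}})$ for the induced index $\hat{n}$, and conclude that Alice wants $\hat{n}=n^*$, which is exactly condition~\eqref{Eqn: Optimal Signaling Condition}. Your treatment is in fact more careful than the paper's on two points the paper glosses over: you make the Stackelberg tie-breaking convention explicit, and you flag feasibility (nonemptiness of the set of $\pi_p$ satisfying the condition, and the degenerate $\alpha=0$ case), whereas the paper simply frames the theorem as a necessary condition and does not discuss attainability.
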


\begin{proof}
From Equation \eqref{Eqn: Alice Optimal Strategy}, Alice's optimal signaling strategy is to send a signal $\pi_p(\boldsymbol{x})$ such that $\displaystyle \sum_{n = 1}^{N} \psi_n \cdot \mathbb{E}_{p}(x_n)$ is maximized. Let $n^* = \argmax \mathbb{E}_{p}(\boldsymbol{x})$. By substituting $n^*$ in Equation \eqref{Eqn: Alice Optimal Strategy}, Alice's optimal signaling strategy is to send a signal $\pi_p(\boldsymbol{x})$ such that her expected utility is 
\begin{equation}
\psi_{n^*} \cdot \mathbb{E}_{p}(x_{n^*}).
\end{equation}
Therefore, from Equation \eqref{Eqn: Bob - Best Response}, Alice seeks a signal $\pi_p(\boldsymbol{x})$ such that
\begin{equation}
\psi_{n^*}(\pi_p(\boldsymbol{x})) = 
\begin{cases}
1, & \text{if } n = \displaystyle \argmax_{n \in \mathcal{N}} \mathbb{E}_p(\boldsymbol{x}_n),
\\[2ex]
0, & \text{otherwise}.
\end{cases}
\end{equation}
By substituting such a $\psi_{n^*}$ into Equation \eqref{Eqn: Bob Expected Utility} and simplifying, Bob's expected utility is
\begin{equation}
\alpha \mathbb{E}_{\pi_p}(x_{n^*}) + (1-\alpha) \mathbb{E}_{q}(x_{n^*}),
\end{equation}
where $n^* = \argmax \mathbb{E}_{p}(\boldsymbol{x})$. By expanding the definition of $n^*$, we obtain the result stated in Theorem \ref{Thrm: Alice Optimal Signaling}.
\end{proof}

In other words, Alice will construct a signal $\pi_p(\boldsymbol{x})$ such that Bob's expected utility is maximized according to his posterior belief $\mathbb{E}_{\phi}(\boldsymbol{x})$ by picking the $n^{*-\text{th}}$ choice, where $n^*$ is the optimal choice at Alice. If Alice sends a signal $\pi_p(\boldsymbol{x})$ such that the inequality in Theorem $\ref{Thrm: Alice Optimal Signaling}$ holds, Alice will successfully persuade Bob to adopt the choice that is optimal for her by exploiting Bob's trust.


If Alice has complete knowledge of $\alpha$ and $q(\boldsymbol{x})$, she wishes to construct a signal $\pi_p(\boldsymbol{x})$ that is persuasive to Bob. In other words, Alice wishes to influence Bob to drive his posterior belief $\phi(\boldsymbol{x})$ to $p(\boldsymbol{x})$. We investigate two sufficient conditions under which Alice can persuade Bob and maximize her utility.

\subsubsection{Partial Information Frame:} 

Alice can steer Bob's decisions according to her desire by modifying Bob's utility $U_B$ to become identical to Alice's utility $U_A$. Formally, the sufficient condition to modify Bob's utility $U_B$ into Alice's utility $U_A$ is $\mathbb{E}_\phi(\boldsymbol{x}) = \mathbb{E}_p(\boldsymbol{x})$, as shown below:
\begin{equation}
U_B(\pi_p(\boldsymbol{x}), \boldsymbol{\psi}) \ = \ \displaystyle \sum_{n = 1}^N \psi_n \displaystyle \mathbb{E}_{\phi}(x_n) \ = \ \displaystyle 
\sum_{n = 1}^N \psi_n \displaystyle \mathbb{E}_{p}(x_n) \ = \ U_A(\pi_p(\boldsymbol{x}), \boldsymbol{\psi}).
\end{equation}

If Alice wishes to send a signal that is sufficient to steer Bob's choice towards Alice's desired outcome, then Alice's goal is to minimize the total squared loss function defined below.
\begin{equation}
\displaystyle \int_{x \in \mathcal{X}} \Big( \mathbb{E}_\phi(\boldsymbol{x}) - \mathbb{E}_p(\boldsymbol{x}) \Big)^2 dx = \displaystyle \int_{x \in \mathcal{X}} \Big( \alpha \mathbb{E}_{\pi_p}(\boldsymbol{x}) + (1-\alpha) \mathbb{E}_{q}(\boldsymbol{x}) - \mathbb{E}_p(\boldsymbol{x}) \Big)^2 dx
\label{Eqn: Alice's Squared Loss}
\end{equation}
Since the first expectation term is a linear operator of Alice's signal $\pi_p$, and since the loss function is quadratic in expectation, the loss function is therefore a convex function of $\pi_p$. Consequently, we can minimize the loss function in \ref{Eqn: Alice's Squared Loss} using standard convex optimization techniques. In this paper, we use CVXPY package in our simulation results in Section \ref{Simulation Results}.

\subsubsection{Complete Information Frame:}

Note that revealing the average rewards $\mathbb{E}_{\pi_p}(\boldsymbol{x})$ is sufficient to steer Bob towards Alice's desired choices. Therefore, any further modification is truly unnecessary, and can lead to distrust at Bob regarding Alice's signaling strategies. For example, Alice can steer Bob's decisions in a similar manner by steering the entire belief $\phi(\boldsymbol{x})$ to match with her prior $p(\boldsymbol{x})$. Formally, a stronger sufficient condition to modify Bob's utility into Alice's utility is $\phi(\boldsymbol{x}) = p(\boldsymbol{x}),$ as shown below:
\begin{equation}
U_B(\pi_p(\boldsymbol{x}), \boldsymbol{\psi}) \ = \ \displaystyle \sum_{n = 1}^N \psi_n \displaystyle \mathbb{E}_{\phi}(x_n) \ = \ \displaystyle 
\sum_{n = 1}^N \psi_n \displaystyle \mathbb{E}_{p}(x_n) \ = \ U_A(\pi_p(\boldsymbol{x}), \boldsymbol{\psi}).
\end{equation}

We utilize Kullback-Leibler (KL) divergence as a metric for minimizing the difference between $p(\boldsymbol{x})$ and $\phi(\boldsymbol{x})$. The KL divergence between $p(\boldsymbol{x})$ and $\phi(\boldsymbol{x})$ is given by
\begin{equation}
\begin{array}{lcl}
D_{KL}\left(p(\boldsymbol{x}), \phi(\boldsymbol{x})\right) & = & \displaystyle \int_{\boldsymbol{x} \in \mathcal{X}} p(\boldsymbol{x}) \log \left( \frac{p(\boldsymbol{x})}{\phi(\boldsymbol{x})} \right) d\boldsymbol{x}
\\[4ex]
& = & \displaystyle \int_{\boldsymbol{x} \in \mathcal{X}} p(\boldsymbol{x}) \log \left( \frac{p(\boldsymbol{x})}{\alpha \pi_p(\boldsymbol{x}) + (1-\alpha) q(\boldsymbol{x})} \right) d\boldsymbol{x}
\end{array}
\end{equation}

Given that our desire is to choose $\pi_p(\boldsymbol{x})$ such that $\phi(\boldsymbol{x})$ is as close to $p(\boldsymbol{x})$ as possible, our goal is to
\begin{equation}
\begin{array}{ll}
\displaystyle \minimize_{\pi_p(\boldsymbol{x})} & D_{KL}\left( p(\boldsymbol{x}),\phi(\boldsymbol{x}) \right)
\\[2ex]
\text{subject to} & \ \text{1. } \pi_p(\boldsymbol{x}) \geq 0, \text{ for all } \boldsymbol{x} \in \mathcal{X},
\\[2ex]
& \ \text{2. } \displaystyle \int_{\boldsymbol{x} \in \mathcal{X}} \pi_p(\boldsymbol{x}) d\boldsymbol{x} = 1.
\tag{P4}
 \label{Eqn: Alice-KL Divergence}
\end{array}
\end{equation}

Similar to the partial information frame, the loss function, i.e. KL-Divergence, is a well-known convex function of $p(\boldsymbol{x})$ and $q(\boldsymbol{x})$ which we seek to minimize. Therefore, we can use standard convex optimization techniques to minimize $D_{KL}$. Specifically, in this paper, we use CVXPY package to minimize $D_{KL}$ in Equation \ref{Eqn: Alice-KL Divergence} in our simulation results presented in Section \ref{Simulation Results}.

\section{Trust Dynamics and Strategic Manipulation}
In this section we first define strategic manipulation and state the conditions under which Alice employs strategic manipulation in her interaction with Bob. We then define the regret that Bob incurs after this interaction as a result of not obtaining his desired outcome, and we present an algorithm by which Bob updates his trust according to his regret.

\begin{defn}
Alice employs \textbf{strategic manipulation} if she chooses $\mathbb{E}_{\pi_p}(\boldsymbol{x}) \neq \mathbb{E}_{\boldsymbol{p}}(\boldsymbol{x})$.
\end{defn}

If we replace $\mathbb{E}_p(\boldsymbol{x}_{n^*})$ in Alice's optimal signaling condition in Equation \eqref{Eqn: Optimal Signaling Condition}, we can find settings in which Alice employs strategic manipulation. We state this condition formally in the following corollary.

\begin{cor}{(to Theorem \ref{Thrm: Alice Optimal Signaling})}
Alice adopts strategic manipulation if there exists at least one $n \in \mathcal{N}$ such that the following condition holds true.
\begin{equation}
\alpha \mathbb{E}_p(x_{n^*}) + (1 - \alpha) \mathbb{E}_q(x_{n^*}) \ < \ \alpha \mathbb{E}_p(x_{n}) + (1 - \alpha) \mathbb{E}_q(x_n),
\label{Eqn: Alice Manipulation Condition}
\end{equation}
where $n^* = \argmax \mathbb{E}_{\pi_p}(\boldsymbol{x})$.
\label{Cor: Alice Strategic Manipulation}
\end{cor}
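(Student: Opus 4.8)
The plan is to argue by contraposition against Theorem \ref{Thrm: Alice Optimal Signaling}. Recall that, by definition, strategic manipulation means Alice selects a signal whose marginal means differ from her own belief, i.e. $\mathbb{E}_{\pi_p}(\boldsymbol{x}) \neq \mathbb{E}_p(\boldsymbol{x})$. To show that the standing hypothesis \eqref{Eqn: Alice Manipulation Condition} forces manipulation, I would assume the contrary, namely that Alice does \emph{not} manipulate, so that her optimal signal satisfies $\mathbb{E}_{\pi_p}(\boldsymbol{x}) = \mathbb{E}_p(\boldsymbol{x})$, and then derive a contradiction with the optimality characterization in \eqref{Eqn: Optimal Signaling Condition}.

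The first substantive step I would take is to reconcile the two notions of the dominant index, and this is the step I expect to be the main obstacle. In Theorem \ref{Thrm: Alice Optimal Signaling} the index is $n^* = \argmax \mathbb{E}_p(\boldsymbol{x})$, whereas in the corollary it is $n^* = \argmax \mathbb{E}_{\pi_p}(\boldsymbol{x})$; a priori these are different choices, and the whole argument hinges on their collapsing to the same choice. Under the no-manipulation hypothesis $\mathbb{E}_{\pi_p}(\boldsymbol{x}) = \mathbb{E}_p(\boldsymbol{x})$, the two maximizers coincide, so the index $n^*$ appearing in \eqref{Eqn: Alice Manipulation Condition} is exactly the index $n^*$ to which Theorem \ref{Thrm: Alice Optimal Signaling} applies. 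I would state this reconciliation explicitly, since without it the substitution in the next step would be comparing quantities indexed by incompatible choices.

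With the indices aligned, I would invoke Theorem \ref{Thrm: Alice Optimal Signaling}: any optimal signal must obey \eqref{Eqn: Optimal Signaling Condition} for every $n \in \mathcal{N}$. Substituting the no-manipulation identity $\mathbb{E}_{\pi_p}(\boldsymbol{x}) = \mathbb{E}_p(\boldsymbol{x})$ into \eqref{Eqn: Optimal Signaling Condition} gives $\alpha \mathbb{E}_p(x_{n^*}) + (1-\alpha)\mathbb{E}_q(x_{n^*}) \geq \alpha \mathbb{E}_p(x_n) + (1-\alpha)\mathbb{E}_q(x_n)$ for all $n$. But the hypothesis \eqref{Eqn: Alice Manipulation Condition} asserts the strict reverse inequality for at least one index $n$; at that particular index the derived non-strict inequality and the hypothesized strict inequality cannot both hold. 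This contradiction shows the no-manipulation assumption is untenable, so whenever \eqref{Eqn: Alice Manipulation Condition} holds any optimal signal must satisfy $\mathbb{E}_{\pi_p}(\boldsymbol{x}) \neq \mathbb{E}_p(\boldsymbol{x})$, i.e. Alice adopts strategic manipulation, which is the claim.
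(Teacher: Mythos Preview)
Your argument is correct and matches the paper's approach: the paper offers no formal proof beyond the one-line remark preceding the corollary (``replace $\mathbb{E}_p(\boldsymbol{x}_{n^*})$ in Alice's optimal signaling condition \ldots''), which is exactly the substitution-then-contradiction you carry out. Your explicit reconciliation of the two competing definitions of $n^*$ under the no-manipulation hypothesis is a genuine improvement over the paper, which leaves that notational tension unaddressed.
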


Note that, when $\alpha = 1$, Condition \eqref{Eqn: Alice Manipulation Condition} does not hold true, since $n^* = \argmax \mathbb{E}_{\pi_p}(\boldsymbol{x})$. In other words, when Bob trusts Alice, Alice naturally has the incentive to reveal truthful information to Bob. We state this result formally in the following corollary.

\begin{cor}{(to Theorem \ref{Thrm: Alice Optimal Signaling})}
If $\alpha = 1$, Alice has no incentive to share deceptive information with Bob. 
\label{Cor: Trust Incentive}
\end{cor}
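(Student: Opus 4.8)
The plan is to treat this as a direct consequence of Corollary \ref{Cor: Alice Strategic Manipulation} together with the structure of Bob's best response in Lemma \ref{Thrm: Bob Best Response}. Since strategic manipulation is defined by $\mathbb{E}_{\pi_p}(\boldsymbol{x}) \neq \mathbb{E}_p(\boldsymbol{x})$, it suffices to show that at $\alpha = 1$ truthful signaling, i.e. $\mathbb{E}_{\pi_p}(\boldsymbol{x}) = \mathbb{E}_p(\boldsymbol{x})$, already attains Alice's maximal utility, so that she has nothing to gain by deviating to a deceptive signal.

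First I would substitute $\alpha = 1$ into the posterior belief in Equation \eqref{Eqn: Posterior Belief}, which collapses to $\phi(\boldsymbol{x}) = \pi_p(\boldsymbol{x})$, so that $\mathbb{E}_\phi(x_n) = \mathbb{E}_{\pi_p}(x_n)$ for every $n \in \mathcal{N}$ and Bob's prior $q(\boldsymbol{x})$ drops out entirely. By Lemma \ref{Thrm: Bob Best Response}, Bob then selects the choice $\argmax_{n} \mathbb{E}_{\pi_p}(x_n)$, so Alice's realized utility in Equation \eqref{Eqn: Alice Ex-Post Utility} equals $\mathbb{E}_p(x_{n^*})$ with $n^* = \argmax_{n} \mathbb{E}_{\pi_p}(x_n)$. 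Since the signal enters only through the location of the maximizer of $\mathbb{E}_{\pi_p}$, Alice can induce Bob to pick any choice she wishes, and her utility is therefore bounded above by $\max_{n} \mathbb{E}_p(x_n)$.

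Next I would show this upper bound is achieved truthfully. Setting $\mathbb{E}_{\pi_p}(\boldsymbol{x}) = \mathbb{E}_p(\boldsymbol{x})$ forces $\argmax_{n} \mathbb{E}_{\pi_p}(x_n) = \argmax_{n} \mathbb{E}_p(x_n)$, so Bob picks exactly Alice's most-preferred choice and Alice attains $\max_{n} \mathbb{E}_p(x_n)$. Equivalently, substituting $\alpha = 1$ into the manipulation condition of Corollary \ref{Cor: Alice Strategic Manipulation} reduces it to $\mathbb{E}_p(x_{n^*}) < \mathbb{E}_p(x_n)$ for some $n$; but under truthful signaling $n^* = \argmax_{n} \mathbb{E}_p(x_n)$, so $\mathbb{E}_p(x_{n^*}) \geq \mathbb{E}_p(x_n)$ for all $n$ and the strict inequality can never be triggered. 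Hence no deceptive signal is needed.

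The step requiring the most care is reconciling the two different definitions of $n^*$ used in the excerpt, namely $\argmax \mathbb{E}_p(\boldsymbol{x})$ in Theorem \ref{Thrm: Alice Optimal Signaling} versus $\argmax \mathbb{E}_{\pi_p}(\boldsymbol{x})$ in Corollary \ref{Cor: Alice Strategic Manipulation}. The crux of the argument is precisely that truthful signaling makes these two maximizers coincide, so I would state explicitly that feasibility of truthful signaling is not by itself enough: I must argue its \emph{optimality} by noting that at $\alpha = 1$ Alice's attainable utility is capped at $\max_{n} \mathbb{E}_p(x_n)$, a bound that truthful signaling already meets, leaving her with no strict incentive to manipulate.
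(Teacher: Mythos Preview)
Your proposal is correct and follows essentially the same approach as the paper: substitute $\alpha = 1$ into the manipulation condition of Corollary~\ref{Cor: Alice Strategic Manipulation} and observe that it reduces to $\mathbb{E}_p(x_{n^*}) < \mathbb{E}_p(x_n)$, which cannot hold once $n^*$ is the maximizer of $\mathbb{E}_p$. Your treatment is in fact more careful than the paper's one-line justification, since you explicitly argue optimality (via the upper bound $\max_n \mathbb{E}_p(x_n)$) rather than mere feasibility of truthful signaling, and you reconcile the two competing definitions of $n^*$ that the paper glosses over.
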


Note that although Alice may reveal truthful information to Bob, he cannot observe if Alice's signaling strategy is congruent with her prior belief $p(\boldsymbol{x})$. This lack of information regarding Alice's prior belief $p(\boldsymbol{x})$ can lead to distrust by Bob regarding Alice, especially when Bob does not obtain his desired outcomes. 

\begin{cor}{(to Theorem \ref{Thrm: Alice Optimal Signaling})}
If $\alpha = 0$, Alice has no incentive to share deceptive information with Bob. 
\label{Cor: Trust Incentive}
\end{cor}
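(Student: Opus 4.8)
The plan is to mirror the argument used for the preceding $\alpha = 1$ corollary, but to exploit the fact that at $\alpha = 0$ Alice's signal is completely decoupled from Bob's decision. First I would substitute $\alpha = 0$ into the definition of $y_n$ in Equation~\eqref{Eqn: y_n}, which collapses to $y_n = \mathbb{E}_q(x_n)$ for every $n \in \mathcal{N}$. The crucial observation is that this expression no longer contains the term $\mathbb{E}_{\pi_p}(x_n)$, so each $y_n$ is independent of Alice's signaling strategy $\pi_p(\boldsymbol{x})$.

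Next I would invoke Bob's best response from Lemma~\ref{Thrm: Bob Best Response}. Since Bob selects the index maximizing $y_n$, and since $y_n = \mathbb{E}_q(x_n)$ does not depend on $\pi_p$, his chosen alternative $n' = \argmax_{n \in \mathcal{N}} \mathbb{E}_q(x_n)$ is fixed by his own prior alone. Substituting the corresponding degenerate $\boldsymbol{\psi}$ into Alice's ex-post utility in Equation~\eqref{Eqn: Alice Ex-Post Utility} yields $U_A = \mathbb{E}_p(x_{n'})$, a quantity that is constant in $\pi_p$. Hence any deceptive choice with $\mathbb{E}_{\pi_p}(\boldsymbol{x}) \neq \mathbb{E}_p(\boldsymbol{x})$ earns Alice exactly the same payoff as truthful signaling, so she gains nothing by deviating and has no incentive to deceive.

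The step that requires care is reconciling this conclusion with Corollary~\ref{Cor: Alice Strategic Manipulation}, whose manipulation condition can still hold at $\alpha = 0$ whenever $\argmax \mathbb{E}_p(\boldsymbol{x}) \neq \argmax \mathbb{E}_q(\boldsymbol{x})$. I would emphasize that this condition merely detects when truthful signaling fails to satisfy the optimality inequality of Theorem~\ref{Thrm: Alice Optimal Signaling}; at $\alpha = 0$, however, no signal can satisfy it, because both sides of that inequality reduce to $\mathbb{E}_q$ terms that Alice cannot influence. Thus manipulation is futile rather than beneficial, and the apparent tension dissolves. I expect this interpretive point to be the main obstacle, since the algebra is immediate but one must argue carefully that the inability of deception to change the outcome is precisely what removes the incentive to deceive.
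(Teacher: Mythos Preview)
Your proposal is correct and follows the same idea as the paper: the paper does not give a formal proof of this corollary but simply observes that when $\alpha = 0$ Alice's signal is not persuasive to Bob, which is exactly what your substitution $y_n = \mathbb{E}_q(x_n)$ and the resulting constancy of $U_A$ in $\pi_p$ make precise. Your additional paragraph reconciling this with Corollary~\ref{Cor: Alice Strategic Manipulation} is not in the paper, but it is a helpful clarification rather than a departure from the argument.
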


We note that although any signal sent by Alice will not be persuasive to Bob when $\alpha = 0$ in a one-shot interaction, it may be beneficial to Alice in a repeated game setting to share truthful information with Bob in order to increase his trust.

\subsubsection{Regret and Updated Trust:}

Given that Alice and Bob have mismatched prior beliefs regarding the choice set $\mathcal{N}$, Bob experiences some immediate regret after each interaction with Alice for selecting the choice that corresponds with Alice's shared signal rather than the choice that Bob would have selected had he not interacted with Alice. The regret experienced by Bob once the true rewards of the choices have been revealed is given by
\begin{equation}
\begin{array}{lcl}
R_B & =  & \displaystyle U^*_B - U_B (\pi_p^{*}(\boldsymbol{x}), \boldsymbol{\psi}^{*}),
\label{Eqn: Bob's Regret}
\end{array}
\end{equation}
where $U^*_B$ is the utility that Bob would have received had he not interacted with Alice and had only relied on his prior belief $q(\boldsymbol{x})$ to make his decision. 

Note that $R_B$ is not the regret incurred by Bob for not picking the choice with maximum utility. Instead, $R_B$ is the regret Bob incurs due to relying on Alice's signal rather than working entirely with his own prior belief. Given that $R_B$ is unbounded and Bob may not necessarily know the true reward values of all choices, he may not be able to update his trust according to the degree of his regret. Therefore we assume that Bob adjusts his trust $\alpha$ to a value $\alpha^\prime$ that lies between his previous trust value and the trust value that would have been optimal according to his regret. Formally, we define the trust update heuristic as
\begin{equation}
\begin{array}{rcl}
\alpha^\prime & = & 
\begin{cases}
\ \min \Big\{ 1, \ \alpha - \varepsilon \cdot \text{sign}(R_B) \Big\}, & \text{ if } \  R_B \leq 0,
\\[3ex]
\ \max \Big\{ \alpha - \varepsilon \cdot \text{sign}(R_B), \ 0 \Big\}, & \  \text{ otherwise}.
\end{cases}
\end{array}
\end{equation}
where $\varepsilon$ is some number in $[0, 1]$ in order to ensure $\alpha^\prime \in [0, 1]$. In other words, if Bob incurs a regret $R_B > 0$ in hindsight, Bob would have obtained a higher utility if he had not trusted Alice (i.e., $\alpha = 0$). Since Bob cannot completely disregard Alice's advice forever in the future, he cannot update $\alpha$ to zero and completely distrust Alice. Instead, he decreases his trust by a finite value $\varepsilon \in [0, 1]$ to mitigate Alice's influence in subsequent interactions and minimize Bob's cumulative regret over time. 

The effects of these trust dynamics are further discussed in the context of our simulation results in Section 5. Although a more appropriate approach is to model receiver's trust dynamics in repeated interaction settings, we restrict our attention to this one-shot analysis. Repeated games is out of scope of the current paper, and will be considered in our future work. 

\section{Simulation Results}
\label{Simulation Results}
In this section we illustrate the changes in the utility obtained by both Alice and Bob depending on whether Alice chooses to reveal complete information regarding her belief ($\pi_p(\boldsymbol{x})$) or partial information regarding her belief ($\mathbb{E}_{\pi_p}(\boldsymbol{x})$) in the problem setting in which Alice has full knowledge of Bob’s state. We also analyze the dynamics of Bob’s trust parameter $\alpha$ and regret $R_B$ for each of these possible signals and plot the average KL divergence in Alice's prior and shared belief depending on Bob's trust parameter $\alpha$. We simulate this problem setting by considering a set of $n = |\mathcal{N}|$ randomly generated choices available to Bob, each of which has a reward value that has been chosen by sampling a continuous uniform distribution in the range $[0, 10]$ for tractability. We generate $p(\boldsymbol{x})$ by first calculating a vector of perceived rewards at Alice $\boldsymbol{x}_A$ such that $\displaystyle \boldsymbol{x}_{A, n} = \boldsymbol{x}_n + \lambda_n$ where $\lambda_n$ is chosen by sampling a normal distribution with $\mu = 0$ and $\sigma = 1$ for tractability. $p(\boldsymbol{x}_n)$ then represents the probability that the reward $\boldsymbol{x}_n$ will be favored by Alice through the normalization of $\boldsymbol{x}_A$. We calculate $q(\boldsymbol{x})$ in the same fashion with respect to Bob. The following results have been generated by computing 1000 iterations of each signaling solution (i.e. \eqref{Eqn: Alice's Squared Loss} or \eqref{Eqn: Alice-KL Divergence}) for each independent variable in the following experiments. \eqref{Eqn: Alice's Squared Loss} and \eqref{Eqn: Alice-KL Divergence} are solved using the default ECOS solver in the CVXPY library in Python. 

\begin{figure}[!t]
\centering
\includegraphics[width=1\textwidth]{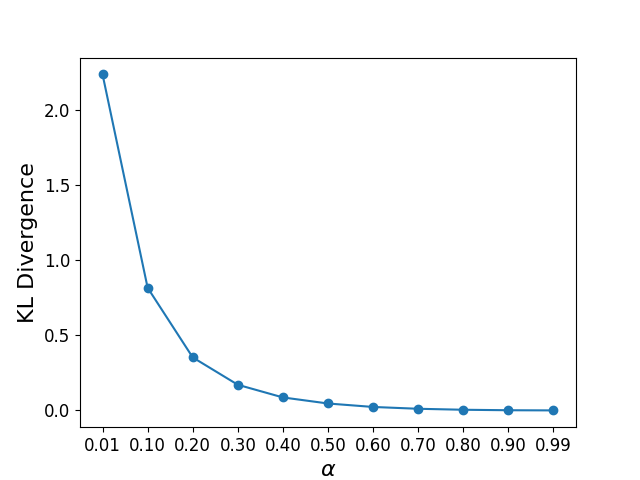}
\caption{Change in the average KL divergence from $\pi_p(\boldsymbol{x})$ to $p(\boldsymbol{x})$ as $\alpha$ increases when calculated across 1000 iterations. For each iteration, the number of choices available to Bob was chosen by sampling a discrete uniform distribution on the interval (2, 20).}\label{Fig:KL Div}
\end{figure}
We use KL divergence as a method of determining the difference in Alice's shared belief $\pi_p(\boldsymbol{x})$ and her prior belief $p(\boldsymbol{x})$ to determine the degree of strategic manipulation that she has employed in her interaction with Bob. From Figure \ref{Fig:KL Div}, we observe the change in the KL divergence from $\pi_p(\boldsymbol{x})$ to $p(\boldsymbol{x})$ as Bob's trust parameter $\alpha$ increases. We note that as $\alpha$ increases, the KL divergence between $\pi_p(\boldsymbol{x})$ and $p(\boldsymbol{x})$ decreases, indicating that as Bob's trust in Alice increases Alice has less incentive to share manipulated information with Bob. Note that $\alpha = 1$ was not considered in Figure \ref{Fig:KL Div} since it follows from equation $\eqref{Eqn: Posterior Belief}$ that when $\alpha = 1$
\begin{equation}
\begin{array}{lcl}
\phi(\boldsymbol{x}) = \alpha \pi_p(\boldsymbol{x}) + (1 - \alpha) q(\boldsymbol{x})
\\[4ex]
\phi(\boldsymbol{x}) = \pi_p(\boldsymbol{x}).
\end{array}
\end{equation}
Therefore, when $\alpha = 1$, $\displaystyle D_{KL}\left(\phi(\boldsymbol{x}), \pi_p(\boldsymbol{x}) \right) = 0$. Given that when $\alpha = 0$ the signal $\pi_p(\boldsymbol{x})$ that Alice sends will have no impact on Bob's decision, we have also chosen to not consider this case in Figure \ref{Fig:KL Div}.

\begin{figure}[!t]
\centering
\subfloat[$\epsilon = 0.1$]{\label{a}\includegraphics[width=0.49\linewidth]{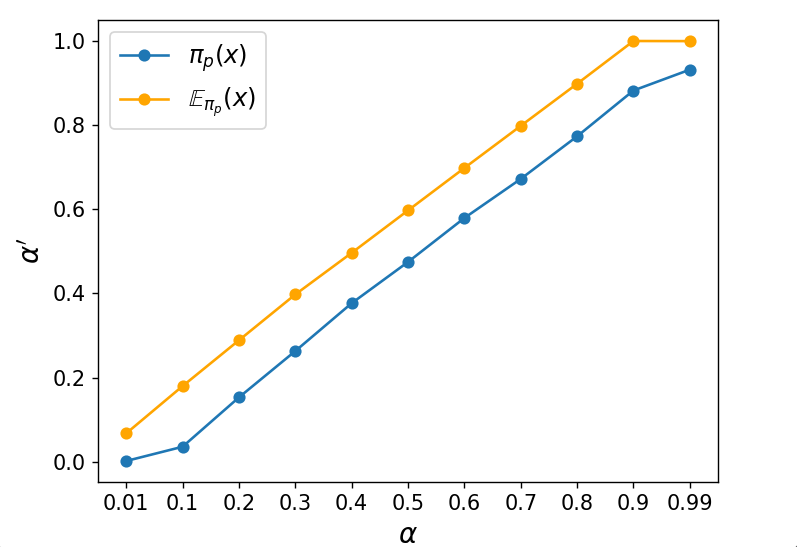}}\hfill
\subfloat[$\epsilon = 0.3$]{\label{b}\includegraphics[width=0.49\linewidth]{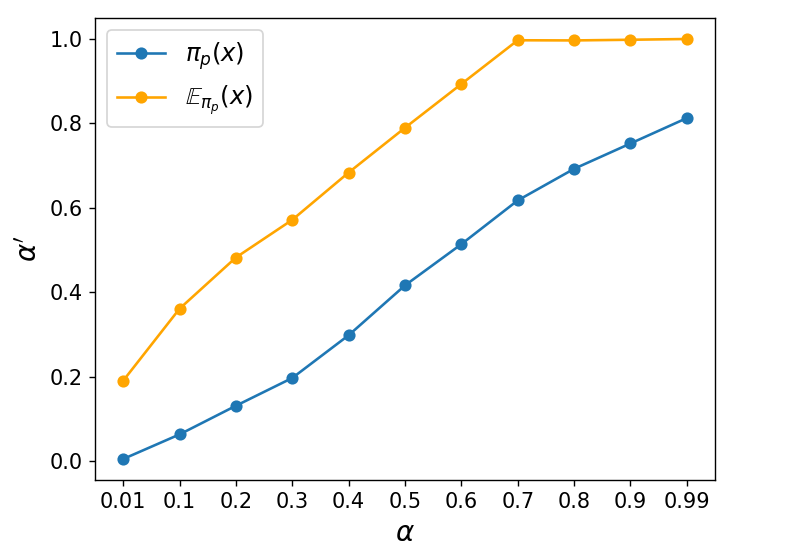}}\par 
\subfloat[$\epsilon = 0.5$]{\label{c}\includegraphics[width=0.49\linewidth]{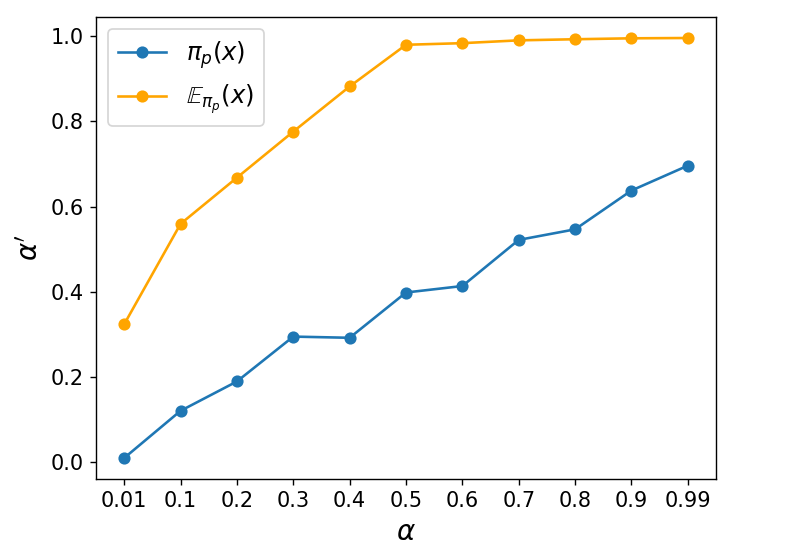}}
\caption{Change from $\alpha$ to the average $\alpha^\prime$ for each possible signal sent by Alice as $\alpha$ increases when calculated across 1000 iterations. For each iteration $n = 10$.}\label{Fig:Trust Update}
\end{figure}
\begin{figure}[!b]
\centering
\vspace{-5ex}
\includegraphics[width=0.7\textwidth]{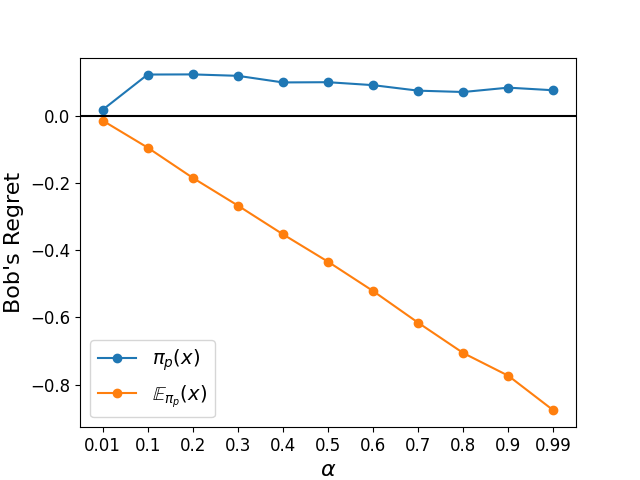}
\caption{Average regret incurred by Bob as $\alpha$ increases for each possible signal sent by Alice when calculated across 1000 iterations. For each iteration $n = 10$.}
\label{fig:Regret}
\end{figure}
\vspace{2cm}
Figure \ref{Fig:Trust Update} illustrates the change from $\alpha$ to $\alpha^\prime$ for each starting value of $\alpha$ at Bob. The results in Figure \ref{Fig:Trust Update} show that the average value of $\alpha^\prime$ is greater when Alice reveals average rewards $\mathbb{E}_{\pi_p}(\boldsymbol{x})$ as opposed to when she reveals the full distribution $\pi_p(\boldsymbol{x})$. Additionally, the difference between Bob's average value of $\alpha^\prime$ when Alice reveals partial information and Bob's average value of $\alpha^\prime$ when Alice reveals complete information increases as the value of $\epsilon$ increases. The results shown in Figure \ref{Fig:Trust Update} imply that Bob experiences a significantly more positive interaction with Alice when she chooses to reveal partial information to Bob. 

From Figure \ref{fig:Regret}, we note that Bob's average regret is always negative and decreases significantly as the value of his trust parameter $\alpha$ increases when Alice reveals average rewards $\mathbb{E}_{\pi_p}(\boldsymbol{x})$. On the other hand, Bob's average regret when Alice reveals the full distribution $\pi_p(\boldsymbol{x})$ remains positive as $\alpha$ increases and decreases only slightly as $\alpha$ approaches 1.

\begin{figure}[!t]
\centering
\includegraphics[width=1\textwidth]{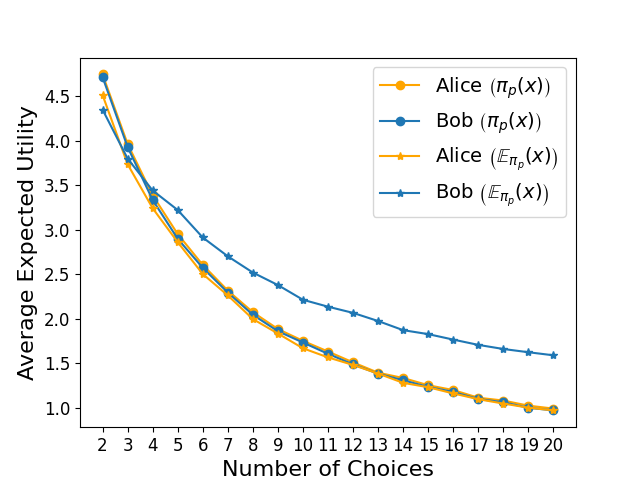}
\caption{Change in average expected utility at Bob as the number of choices available to Bob increases when calculated across 1000 iterations. For each iteration $\alpha = 0.5$.}
\label{fig:Utility}
\end{figure}

Figure \ref{fig:Utility} shows the change in the average expected utility obtained by Alice and Bob for each signaling scheme that Alice can choose as the number of choices available to Bob increase. As the number of choices available ranges from [2, 3], both Bob and Alice receive a higher expected utility when Alice shares the full distribution $\pi_p(\boldsymbol{x})$ with Bob. However, as the number of choices ranges from [4, 20], the expected utility obtained by Alice is roughly equivalent regardless of which signaling scheme she chooses. The expected utility obtained by both Alice and Bob are roughly equivalent throughout this range of choices when Alice reveals the full distribution $\pi_p(\boldsymbol{x})$ to Bob. On the other hand, Bob receives a significantly higher expected utility than Alice when the number of choices available ranges from [4, 20] when Alice reveals expected rewards $\mathbb{E}_{\pi_p}(\boldsymbol{x})$. From these results, we can conclude that when the number of choices available $n > 3$, Alice does not have any incentive to choose one signaling scheme over the other in a one-shot interaction with Bob. However, from Figures \ref{Fig:Trust Update} and \ref{fig:Utility}, Alice may benefit in a repeated interaction setting by choosing to reveal average rewards to Bob thereby increasing the value of his trust parameter $\alpha$.

From the above figures, we note that Bob benefits significantly in this interaction with Alice when Alice chooses to reveal partial information through average rewards $\mathbb{E}_{\pi_p}(\boldsymbol{x})$ rather than a complete information through the full distribution $\pi_p(\boldsymbol{x})$ when the number of choices available to Bob $n > 3$. While revealing a full distribution to Bob provides an explanation to Bob regarding the beliefs of Alice, the significant increases in trust and expected utility that Bob benefits from as a result of Alice choosing to share expected rewards with Bob may be a result of the interaction with Alice lacking transparency, as Bob cannot confirm during the interaction whether Alice's signal $\pi_p(\boldsymbol{x})$ is congruent with her prior belief $p(\boldsymbol{x})$, or a result of cognitive overload at Bob.

\section{Conclusion and Future Work}
In this paper, we investigated the impact of framing effects and trust dynamics on  strategic information design when the state space is unknown to both sender and receiver. We proved that the receiver's best response strategy is a singleton, i.e. the receiver chooses the optimal choice in a deterministic manner, and also show that the sender's optimal strategy is computationally intractable. Therefore, we considered two signaling frames, one where the sender presents partial information, and the other where the sender presents full information to the receiver. We also model trust dynamics based on receiver's regret for accepting sender's information. Numerical results were presented to demonstrate the degree of strategic manipulation employed by the sender depending on receiver's trust, the effects of each signaling frame on receiver's trust, the regret incurred by the receiver, and the expected utility obtained by each agent, all in the case of both signaling frames. In the future, we will extend this work by considering a repeated interaction setting with no-regret dynamics at the sender in which the receiver’s trust parameter $\alpha$ and prior belief  $q(\boldsymbol{x})$ are unknown at the outset of the interaction. We will also develop a better trust dynamics model in the repeated interaction setting where the receiver minimizes cumulative regret, as opposed to instantaneous, one-shot regret in our paper.

%
%
%
 \bibliographystyle{splncs04}
 \bibliography{bibliography}

\begin{thebibliography}{10}
\providecommand{\url}[1]{\texttt{#1}}
\providecommand{\urlprefix}{URL }
\providecommand{\doi}[1]{https://doi.org/#1}

\bibitem{Babichenko2021}
Babichenko, Y., Talgam-Cohen, I., Xu, H., Zabarnyi, K.: {Regret-Minimizing
  Bayesian Persuasion}. arXiv preprint arXiv:2105.13870  (2021)

\bibitem{barnett2005}
Barnett~White, T.: Consumer trust and advice acceptance: The moderating roles
  of benevolence, expertise, and negative emotions. Journal of Consumer
  Psychology  \textbf{15}(2),  141--148 (2005)

\bibitem{Crawford1982}
Crawford, V.P., Sobel, J.: {Strategic Information Transmission}. Econometrica
  pp. 1431--1451 (1982)

\bibitem{Das2017}
Das, S., Kamenica, E., Mirka, R.: {Reducing Congestion through Information
  Design}. In: 2017 55th Annual Allerton Conference on Communication, Control,
  and Computing (Allerton). pp. 1279--1284 (2017).
  \doi{10.1109/ALLERTON.2017.8262884}

\bibitem{deryugina2013}
Deryugina, T.: How do people update? the effects of local weather fluctuations
  on beliefs about global warming. Climatic change  \textbf{118}(2),  397--416
  (2013)

\bibitem{Dong2010}
Dong, L., Han, Z., Petropulu, A., Poor, H.: Improving wireless physical layer
  security via cooperating relays. IEEE Transactions on Signal Processing
  \textbf{58}(3),  1875--1888 (March 2010). \doi{10.1109/TSP.2009.2038412}

\bibitem{Dughmi2017}
Dughmi, S.: {Algorithmic Information Structure Design: A Survey}. ACM SIGecom
  Exchanges  \textbf{15}(2),  2--24 (2017)

\bibitem{epstein2006}
Epstein, L.G.: An axiomatic model of non-bayesian updating. The Review of
  Economic Studies  \textbf{73}(2),  413--436 (2006)

\bibitem{Farrell1996}
Farrell, J., Rabin, M.: {Cheap Talk}. Journal of Economic perspectives
  \textbf{10}(3),  103--118 (1996)

\bibitem{Feng2020}
Feng, Z., Parkes, D., Xu, H.: {The Intrinsic Robustness of Stochastic Bandits
  to Strategic Manipulation}. In: International Conference on Machine Learning.
  pp. 3092--3101. PMLR (2020)

\bibitem{ferwerda2018}
Ferwerda, B., Swelsen, K., Yang, E.: Explaining content-based recommendations.
  New York pp. 1--24 (2018)

\bibitem{griffin1992}
Griffin, D., Tversky, A.: The weighing of evidence and the determinants of
  confidence. Cognitive psychology  \textbf{24}(3),  411--435 (1992)

\bibitem{guo2018}
Guo, Y., Shmaya, E.: Costly miscalibration. Tech. rep., working paper (2018)

\bibitem{Herlocker2000}
Herlocker, J.L., Konstan, J.A., Riedl, J.: Explaining collaborative filtering
  recommendations. In: Proceedings of the 2000 ACM Conference on Computer
  Supported Cooperative Work. p. 241–250. CSCW '00, Association for Computing
  Machinery, New York, NY, USA (2000). \doi{10.1145/358916.358995},
  \url{https://doi.org/10.1145/358916.358995}

\bibitem{Kamenica2011}
Kamenica, E., Gentzkow, M.: {Bayesian Persuasion}. American Economic Review
  \textbf{101}(6),  2590--2615 (October 2011)

\bibitem{Nachbar2020}
Nachbar, J., Xu, H.: {The Power of Signaling and its Intrinsic Connection to
  the Price of Anarchy}. arXiv preprint arXiv:2009.12903  (2020)

\bibitem{odonovan2005}
O'Donovan, J., Smyth, B.: Trust in recommender systems. In: Proceedings of the
  10th international conference on Intelligent user interfaces. pp. 167--174
  (2005)

\bibitem{Oudah2018}
Oudah, M., Rahwan, T., Crandall, T., Crandall, J.W.: {How AI Wins Friends and
  Influences People in Repeated Games with Cheap Talk?} In: Thirty-Second AAAI
  Conference on Artificial Intelligence (2018)

\bibitem{Rayo2010}
Rayo, L., Segal, I.: {Optimal Information Disclosure}. Journal of Political
  Economy  \textbf{118}(5),  949--987 (2010)

\bibitem{Sundaram2021}
Sundaram, R., Vullikanti, A., Xu, H., Yao, F.: {PAC-Learning for Strategic
  Classification}. In: International Conference on Machine Learning. pp.
  9978--9988. PMLR (2021)

\bibitem{Tintarev2007}
Tintarev, N., Masthoff, J.: A survey of explanations in recommender systems.
  pp. 801--810 (05 2007). \doi{10.1109/ICDEW.2007.4401070}

\bibitem{Yan2020}
Yan, C., Xu, H., Vorobeychik, Y., Li, B., Fabbri, D., Malin, B.A.: {To Warn or
  Not to Warn: Online Signaling in Audit Games}. In: 2020 IEEE 36th
  International Conference on Data Engineering (ICDE). pp. 481--492. IEEE
  (2020)

\bibitem{Zhang2020}
Zhang, Y., Chen, X.: Explainable recommendation: A survey and new perspectives.
  Foundations and Trends® in Information Retrieval  \textbf{14}(1),  1–101
  (2020). \doi{10.1561/1500000066}, \url{http://dx.doi.org/10.1561/1500000066}

\end{thebibliography}
%
%
%
%
%
\end{document}